\documentclass[11pt]{amsart}
\usepackage{amssymb,amsmath,amsthm}
\usepackage{a4wide}
\usepackage{graphicx}

\usepackage{color}
\usepackage{mathrsfs}
\usepackage{latexsym}
\usepackage{bbm}
\usepackage{dsfont}
%\usepackage{epsfig}
%\usepackage{subfig}
%\usepackage{url}

%\usepackage{color}
%\usepackage{ifpdf}
%\ifpdf
%    \pdfcompresslevel=9
%    \usepackage[pdftex]{graphicx}
%    \DeclareGraphicsExtensions{.png,.pdf,.jpg}
%\else
%   \usepackage[dvips]{graphicx}
%   \DeclareGraphicsExtensions{.eps}
%\fi
%\graphicspath{{.}{figuresPL/}}
%\usepackage{headerfooter}
%\usepackage[latin1]{inputenc}
\numberwithin{equation}{section}

\theoremstyle{plain}
\begingroup
\newtheorem{theorem}{Theorem}[section]
\newtheorem{lemma}[theorem]{Lemma}

\endgroup

\theoremstyle{definition}
\begingroup
\newtheorem{definition}[theorem]{Definition}

\endgroup
%\pagefooter{}{\thepage}{{\small\sc [\today]}}

%%%%Restrizione di una misura%%%%%%%%
\usepackage{graphicx}

%%%%%%%%%%%%%%%%

%%%%%%%Inserire figura da pdf+tex creata con Inkscape %%%%%%%%%%%%
\usepackage{import}
\usepackage{xifthen}
\usepackage{pdfpages}
\usepackage{transparent}

\newcommand{%
    \def\svgwidth{0.35\columnwidth}
    \import{./}{.pdf_tex}
}[1]{%
    \def\svgwidth{0.35\columnwidth}
    \import{./}{#1.pdf_tex}
}

%%%%%%%Insiemi di facce%%%%%%%%%%%%%%%%%%%

\newcommand{\VV}{\mathscr{V}}
%%%%%%%%%%%%%%%%%%%%%%%%%%%%%
%%%%%%%%%%%%%%%%%%%%%%%%%%%%%
%%%%%%%%%%%%%%%%%%%%%%%%%%%%%

\newcommand{\Fa}{\mathsf{F}}
%%%%%%%%%%%%%%%%%%%%%%%%%%%%%
%%%%%%%%%%%%%%%%%%%%%%%%%%%%%
%%%%%%%%%%%%%%%%%%%%%%%%%%%%%
%%%%%%%%%%%%Insiemi di bond%%%%%%%%%%%%%%%

\newcommand{\gr}{\mathsf{gr}}
%%%%%%%%%Energy functionals%%%%%%%%%%%%%%%%%

\newcommand{\F}{\mathcal{F}}
%\newcommand{\G}{\mathcal{G}}

%%%%%%%%%%%%Remainder%%%%%%%%%%%%%%%%

%%%%%%%%%%%%%%Measures%%%%%%%%%%%%%%%
\newcommand{\A}{\mathcal A}

%%%%%%Modified functions%%%%%%%%%%%%%%%%%%%%%

%%%%%%Number Sets
\newcommand{\N}{\mathbb{N}}

\newcommand{\R}{\mathbb{R}}
\newcommand{\clos}{\mathrm{clos}}

%%%%%%Lattice Notations

%%%%%%%%%%%%%%%%%%%%%%%%%%%%
%%%%%%%%Functionals
\newcommand{\E}{\mathcal{E}}

%%%%%%%%%%%%%%%%%%%%%%%%%%%%

\renewcommand\tilde{\widetilde}

\newcommand{\newatop}{\genfrac{}{}{0pt}{1}}

\def\A{\mathcal{A}}

 %bad faces
 %additional edges
 %crossing edges
 %noncrossing edges
\newcommand{\G}{\mathsf{G}}

\def\E{\mathcal{E}}

\def\F{\mathcal{F}}

\def\1{\mathbf{1}}

\def\Int{\mathrm{int}}

\def\XXint#1#2#3{{\setbox0=\hbox{$#1{#2#3}{\int}$ }
\vcenter{\hbox{$#2#3$ }}\kern-.57\wd0}}

\def\E{\mathcal{E}}

%%%%%%%%%%%%%%%%graph notations

\newcommand{\Ed}{\mathsf{Ed}}
\newcommand{\Wi}{\mathrm{wire}}
\newcommand{\ext}{\mathrm{ext}}

\newcommand{\inte}{\mathrm{int}}
\newcommand{\X}{\mathsf{X}}

%%%%%%%%%%%%%%%%perimeter
\newcommand{\Per}{\mathrm{Per}}

\usepackage{hyperref}

\title[2D crystallization for soft discs] {A crystallization result in two dimensions for a soft disc affine potential}

%{Some recent results on 2D crystallization for sticky disc models: }

%%%%%%%%%%%%%%%%%%%%%%%%%%%%%
%%%%%%%%%%%%%%%%%%%%%%%%%%%%%
%%%%%%%%%%%%%%%%%%%%%%%%%%%%%

\author[G. Del Nin]
{G. Del Nin} 
\address[Giacomo Del Nin]{MPI MiS, Inselstrasse, 22-26 D-04103 Leipzig, Germany
}
\email[G. Del Nin]{giacomo.delnin@mis.mpg.de}

\author[L. De Luca]
{L. De Luca}
\address[Lucia De Luca]{
IAC-CNR, Via dei Taurini, 19 I-00185 Rome, Italy
}
\email[L. De Luca]{lucia.deluca@cnr.it}

%%%%%%%%%%%%%%%%%%%%%%%%%%%%%
%%%%%%%%%%%%%%%%%%%%%%%%%%%%%
%%%%%%%%%%%%%%%%%%%%%%%%%%%%%
\begin{document}

\begin{abstract}
	We prove finite crystallization for particles in the plane interacting through a soft disc potential, as originally shown by C. Radin \cite{Radin_soft}. We give an alternative proof that relies on the geometric decomposition of the energy proved in \cite{DLF1}, and that is based on showing that any minimizer has at least as many boundary points as the canonical ``spiral'' configuration.
		\vskip5pt
	\noindent
	\textsc{Keywords:} Crystallization; collective behavior; graph theory; soft disc; variational methods.
	\vskip5pt
	\noindent
	\textsc{AMS subject classifications:} 70C20, 05C10,  49J45,  82D25.
\end{abstract}
\maketitle
%%%%%%%%%%%%%%%%%%%%%%%%%%%%%%%%%%%%%%
%%%%%%%%%%%%%%%%%%%%%%%%%%%%%%%%%%%%%%
%%%%%%%%%%%%%%%%%%%%%%%%%%%%%%%%%%%%%%
\tableofcontents
%%%%%%%%%%%%%%%%%%%%%%%%%%%%%%%%%%%%%%
%%%%%%%%%%%%%%%%%%%%%%%%%%%%%%%%%%%%%%
%%%%%%%%%%%%%%%%%%%%%%%%%%%%%%%%%%%%%%
\section{Introduction}
This paper deals with finite crystallization in two dimensions for a soft disc affine pairwise interaction potential at zero temperature.
Following a nowadays standard approach, we want to look at crystallization as a phenomenon emerging by the minimization of suitable energy functionals (see \cite{BL} for a recent review on the crystallization conjecture).
Specifically, we consider an interaction potential of the form
\begin{equation}\label{soft_potential}
\VV^\delta(r):=\left\{\begin{array}{ll}
+\infty&\textrm{if }r<1\,,\\
-1+\frac{r-1}{\delta}&\textrm{if }1\le r\le 1+\delta\,,\\
0&\textrm{if }r>1+\delta\,,
\end{array}
\right.
\end{equation}
with
\begin{equation}\label{delta}
    0< \delta<\frac{1}{2\sin\frac\pi 7}-1\,.
\end{equation}
Given a finite set $\X\subset\R^2$\,,
representing the positions of a system of particles, we define the associated energy as 
\begin{equation}\label{introenergy}
    \E_{\VV^\delta}(\X):=\frac12\sum_{x\in\X}\sum_{x'\in\X\setminus\{x\}}\VV^\delta(|x-x'|)\,,
\end{equation}
where, here and below, $|\cdot|$ denotes the Euclidean norm.
We stress that, in view of assumption \eqref{delta}, any point $x$ belonging to a configuration $\X$ with $\E_{\VV^\delta}(\X)<+\infty$ cannot have more than six nearest neighbors, i.e., there are at most $J\le 6$ points $x_1,\ldots,x_J\in\X$ such that $1\le |x-x_j|\le 1+\delta$ (for $j=1,\ldots,J$). As a consequence, if we were allowed to consider infinite configurations,  we would expect that for any minimizer $\X$ of  $\E_{\VV^\delta}$\,, each point $x\in\X$ has exactly $6$ nearest neighbors and that, in view of \eqref{soft_potential}, the nearest-neighbor distance equals $1$\,.
We show that this behaviour persists in some sense also for finite configurations;
more precisely, we prove that for every fixed $N\in\N$ all the minimizers $\X_N$ of  $ \E_{\VV^\delta}(\X)$\,, among the configurations $\X$ having exactly $N$ points, are subsets of the regular triangular lattice with lattice spacing equal to $1$ and that the {\it canonical configurations} provided by Definition \ref{canondef} below are minimizers of the energy.
Such  canonical configurations have the macroscopic shape of a regular hexagon (which is the Wulff shape for this problem \cite{AFS}) with side length $s$ if $N=N_s=1+6+\ldots+6s=3s^2+3s+1$\,, whereas for general $N\in (N_s,N_{s+1})$ they are obtained by 
nestling the remaining points around the
boundary of the regular hexagon constructed for $N_s$\,.

Our analysis extends \cite{Radin_soft} (see also \cite{Wagner}), where the above result was proved for $\delta=\frac{1}{24}$ using an ansatz on the value of the minimal energy (coinciding with the energy of the canonical configurations),  previously found by Harborth in \cite{Har}. Such an approach had been previously exploited in \cite{HeRa} to treat the {\it sticky disc} model, corresponding to the choice $\delta=0$ in \eqref{soft_potential}. We stress that the energy
 $\E_{\VV^\delta}$ (for $0\le \delta<1/{(2\sin\frac\pi 7)}-1$) is somehow the only pairwise interaction energy for which $2d$ finite crystallization in the regular triangular lattice is proven. We refer to \cite{Theil} for a thermodynamic crystallization result (i.e., when $N\to \infty$)
in the context of long-range Lennard-Jones potentials.

Our approach to the soft disc crystallization problem is based on induction on $N$ and represents a
sort of a hybrid between the ansatz adopted in \cite{Radin_soft} and the (more transparent) strategy introduced in \cite{DLF1}. As in the former, we are using the ansatz on the value of the minimal energy, 
since we prove (and hence, assume in the inductive scheme) that canonical configurations are minimizers of the energy.
On the other hand, adopting the perspective of the latter, our proof avoids the numerical computations in \cite[page 371]{Radin_soft}.
We describe in detail our strategy.
First, we  consider the graph structure associated to any finite configuration $\X$ with finite energy, by defining the set $\Ed(\X)$ of bonds associated to $\X$ as the class of pairs $\{x,y\}\subset\X$ with $1\le |x-y|\le 1+\delta$\,.
%$\VV^{\delta}(|x-y|)\neq 0$\,.
By the assumption \eqref{delta},  the graph $\G_\X=(\X,\Ed(\X))$ is planar, so that the tools in planar graph theory, recalled in Section \ref{prelimgra},
can be exploited to face our problem. 
We then rewrite the energy $\E_{\VV^\delta}(\X)$
as the sum of a bulk term, $-3\sharp\X$\,, plus a positive term, denoted by $\F_{\VV^\delta}(\X)$ depending on the ``shape'' of the configuration $\X$\,. 
The former encodes
the fact that each point $x\in\X$ can have at most six nearest neighbors.
The latter, which is defined in \eqref{FV}, consists of a perimeter-like term, plus some other terms that take into account both the stored \textit{elastic} energy due to bonds of length different from 1, and the \textit{plastic} energy due to topological defects, namely, non-triangular faces. 
Trivially, the class of minimizers of $\E_{\VV^\delta}$ coincides with that of the minimizers of 
 $\F_{\VV^\delta}$\,.
In Lemma \ref{lb}, we provide a lower bound for the energy  $\F_{\VV^\delta}$ 
of a given  configuration $\X$ in terms of the energy of the configuration $\X\setminus\partial\X$\,, where $\partial\X$ is the set of ``boundary'' points of $\X$\,, i.e., roughly speaking, of the points $x\in\X$ lying on the boundary of the union of the finite faces of the graph $\G_\X$\,. 
Arguing by induction, we prove at once the two parts of the statement, namely that all the minimizers of $\E_{\VV^\delta}$ are subsets of the $1$-spaced regular triangular lattice and that the canonical configurations $\overline{\X}_N$ minimize the energy for any fixed $N\in\N$\,. 
A key to our proof is using Lemma \ref{lb} to
 show that any configuration $\X$ with $\sharp\X=N$ cannot have less boundary points than the canonical configuration $\overline{\X}_N$\,.
Since in $\overline\X_N$ all the interior points, i.e., the points in $\overline\X_N\setminus\partial\overline\X_N$\,, have exactly $6$ nearest neighbors, this must be the case also for any minimizer of the energy, whence we deduce the desired claim.

We highlight that the proof of Theorem \ref{crystHR} works verbatim for the case $\delta=0$\,,
providing a proof of the crystallization for the sticky disc potential that is slightly different from \cite{HeRa} and \cite{DLF1}.
Finally, since the minimizers of the soft affine problem dealt with here coincide with those of the sticky disc problem, the results on the asymptotic Wulff shape \cite{AFS} as well as the estimates on the fluctuations \cite{S,DPS,CL} hold verbatim in our case (see also \cite{DLF2} for a purely discrete result concerning the uniqueness of minimizers).
\vskip20pt
 %%%%%%%%%%%%%%%%%%%%%%%%%%%%%%
%%%%%%%%%%%%%%%%%%%%%%%%%%%%%%
%%%%%%%%%%%%%%%%%%%%%%%%%%%%%%
\textsc{Acknoledgments:}  LDL is member of the Gruppo Nazionale per l'Analisi Matematica, la Probabilit\`a e le loro Applicazioni (GNAMPA) of the Istituto Nazionale di Alta Matematica (INdAM).
 \vskip20pt

%%%%%%%%%%%%%%%%%%%%%%%%%
%%%%%%%%%%%%%%%%%%%%%%%%%
%%%%%%%%%%%%%%%%%%%%%%%%%
%\section{Introduction}\label{sec:intro}
%%%%%%%%%%%%%%%%%%%%%%%%%
%%%%%%%%%%%%%%%%%%%%%%%%%
%%%%%%%%%%%%%%%%%%%%%%%%%

%Nota: aggiungere "Crystallinity in two dimensions: A note on a paper of C. Radin" by Heinz-Jürgen Wagner .

{\bf Notation:} In what follows $\N$ denotes the set of positive integer numbers and $\N_0:=\N\cup\{0\}$\,.

%%%%%%%%%%%%%%%%%%%%%%%%%
%%%%%%%%%%%%%%%%%%%%%%%%%
%%%%%%%%%%%%%%%%%%%%%%%%%

\section{Preliminaries on planar graphs}\label{prelimgra}

Here we collect some notions and notation on planar graphs that will be adopted in this paper. 

Let $\X$ be a finite subset of $\R^2$ and let $\Ed$ be a given subset of $\mathsf{E}(\X)$\,, where
\begin{equation*}
\mathsf{E}(\X):= \{\{x,y\}\subset \R^2\,:\, x,y\in\X\,,\,x\neq y\}\,.
\end{equation*}

 The pair $\G=(\X,\Ed)$ is called {\it graph};  $\X$ is called  the set of {\it vertices} of $\G$ and $\Ed$ is called the set of {\it edges} (or {\it bonds}) of $\G$\,.

Given $\X'\subset\X$ we denote by $\G_{\X'}$ the {\it subgraph} (or {\it restriction}) of $\G$ generated by $\X'$\,, defined by $\G_{\X'}=(\X',\Ed')$ where $\Ed':=\{\{x',y'\}\in\Ed\,:\, x',y'\in\X'\}$\,.

\begin{definition}\label{conncomp}
We say that two points $x,z\in\X$ are connected and we write $x\sim z$ if there exist $M\in\N$ and a {\it path} $x=y_0,\ldots,y_M=z$ such that $\{y_{m-1},y_m\}\in\Ed$ for every $m=1,\ldots,M-1$\,. We say that  $\G_{\X_1},\ldots,\G_{\X_K}$ with $K\in\N$ are the {\it connected components} of $\G$ if 
$\{\X_1,\ldots,\X_{K}\}$ is a partition of $\X$
and for every $k,k'\in\{1,\ldots,K\}$ with $k\neq k'$ it holds
\begin{align*}
x_k\sim y_k\qquad&\textrm{for every }x_k,y_k\in\X_k\,,\\
x_{k}\not\sim x_{k'}\qquad&\textrm{for every }x_k\in\X_k\,, x_{k'}\in\X_{k'}\,.
\end{align*}
If $\G$ has only one connected component we say that $\G$ is {\it connected}.
\end{definition}

We say that $\G$ is planar if for every pair of (distinct) bonds $\{x_1,x_2\}, \{y_1,y_2\}\in\Ed$, the (open) segments $(x_1,x_2)$ and $(y_1,y_2)$ have empty intersection.

From now on we assume that $\G=(\X,\Ed)$ is planar, so that we can introduce the notion of face (see also \cite{DLF1}).

By a face $f$ of $\G$ we mean any open, bounded, connected component of 
$
\R^2\setminus \big(\X\cup\bigcup_{\{x,y\}\in\Ed}[x,y]\big)$, which is also simply connected; here $[x,y]$ is the closed segment with extreme points $x$ and $y$. We denote by 
$\Fa(\G)$\,, the set of faces of $\G$ and 
 we set
\begin{equation*}
O(\G):=\bigcup_{f\in\Fa(\G)}\clos(f)\,.
\end{equation*}
We define the Euler characteristic of $\G$ as
$$
\chi(\G)=\sharp\X-\sharp\Ed+\sharp\Fa(\G)\,,
$$
and we warn the reader that this may differ from the standard Euler characteristic in graph theory. 
%(see \cite[Lemma 1.2 \& Remark 1.3]{DeLNiPo})\,.
We just remark that if $\chi(\G)=1$\,, then $\G$ is connected.
%%%%%%%%%%%%%%%%%%%%%%%%%%
%%%%%%%%%%%%%%%%%%%%%%%%%%
%%%%%%%%%%%%%%%%%%%%%%%%%%

With a little abuse of language we will say that an edge $\{x,y\}$ lies on a set $E\subset\R^2$ if the segment $[x,y]$ is contained in $E$\,.
We classify the edges in $\Ed$ in the following subclasses:
\begin{itemize}
\item $\Ed^{\Int}$ is the set of {\it interior edges}, i.e., of edges lying on the boundary of two (distinct) faces; 
\item $\Ed^{\Wi,\ext}$ is the set of {\it exterior wire edges}, i.e., of edges that do not lie on the boundary of any face;
\item $\Ed^{\Wi,\mathrm{int}}$ is the set of {\it interior wire edges}, i.e., of edges lying on the boundary of precisely one face but not on the boundary of its closure (or, equivalently, of $O(\G)$)\,;
\item $\Ed^{\partial}$  is the set of {\it boundary edges}, i.e., of edges lying on $\partial O(\G)$\,. 
\end{itemize}
With a little abuse of notation we set $\partial\X:=\{x\in \X\,:\exists y\in\X \text{ such that }\,\{x,y\}\in\Ed^{\partial}\cup \Ed^{\Wi,\ext}\}$\,.
We define the {\it graph-perimeter} of $\G$ as
\begin{equation*}
\Per_{\gr}(\G):=\sharp\Ed^{\partial}+2\sharp\Ed^{\Wi,\ext}\,.
\end{equation*}
According with the definitions introduced above, if $O(\G)$ has simple and closed polygonal boundary and if $\sharp\Ed^{\Wi,\ext}=0$, then $\Per_\gr(\G)=\sharp\partial\X$\,.
We stress that if $\G$ has no edges, then $\Per_{\gr}(\G)=\sharp\partial\X=0$\,.

Analogously, for every face $f\in \Fa(\G)$ one can define the following subclasses of edges delimiting $f$:
\begin{itemize}
\item $\Ed^{\Wi,\inte}(f)$ is the set of edges lying on the boundary of $f$ but not on the boundary of the closure of $f$;
\item $\Ed^{\partial}(f)$ is the set of edges lying on the boundary of the closure of $f$.
\end{itemize}
Therefore, the {\it graph-perimeter} of a face $f$ is defined by
 \begin{equation*}\label{graphgeoper}
\Per_{\gr}(f):=\sharp\Ed^{\partial}(f)+2\sharp \Ed^{\Wi,\inte}(f).
\end{equation*}

Finally, following \cite[Sec. 2.6]{DLF1}, we define the \textit{defect measure} $\mu(\G)$ of the graph $\G$, as the number of additional edges that we need to add to $\G$ to make it triangulated. More precisely: for every face $f$ with $\Per_{\gr}(f)=k$\,, $k\ge 4$\,, we triangulate it by adding $k-3$ edges that connect not already connected vertices and that do not cross each other, thus obtaining a new graph $\overline{\G}$. Then $\mu(\G):=\sharp \Ed(\overline{\G})-\sharp \Ed(\G) $\,.
%%%%%%%%%%%%%%%%%%%%%%%%%%%%%%%%%%%%%%
%%%%%%%%%%%%%%%%%%%%%%%%%%%%%%%%%%%%%%
%%%%%%%%%%%%%%%%%%%%%%%%%%%%%%%%%%%%%%
\section{The soft disc model}\label{soft:sec}
%%%%%%%%%%%%%%%%%%%%%%%%%%%%%%%%%%%%%%
%%%%%%%%%%%%%%%%%%%%%%%%%%%%%%%%%%%%%%
%%%%%%%%%%%%%%%%%%%%%%%%%%%%%%%%%%%%%%
For every $0<\delta<\frac{1}{2\sin\frac\pi 7}-1$ let $\VV^\delta:[0,+\infty)\to [0,+\infty]$ be the function defined in \eqref{soft_potential} and, for every finite $\X\subset\R^2$\,, let $\E_{\VV^\delta}(\X)$ be the corresponding energy functional as defined in \eqref{introenergy}.
For every $N\in\N$ we denote by $\A_N$ the set of $N$-particle configurations with finite energy, i.e., $\mathcal{A}_N:=\{\X\subset\R^2\,:\,\sharp\X=N\,,\,\E_{\VV^{\delta}}(\X)<+\infty\}$ and we set $\A:=\bigcup_{N\in\N}\A_N$\,.

For every $\X\in\A$\,, we denote by  $\G(\X)$ the {\it graph generated by }$\X$\,, i.e., $\G(\X)=(\X,\Ed(\X))$\,, where $\Ed(\X):=\{\{x,y\}\,:\,x,y\in\X\,,\,1\le|x-y|\le1+\delta\}$\,. 
Notice that the finiteness of $\E_{\VV^{\delta}}(\X)$\,, implies that $\G(\X)$ is a planar graph
and that for any given point $x\in\X$ there could be at most six edges lying on $x$\,.
In what follows, with a little abuse of notation, we set $\Per_{\gr}(\X):=\Per_{\gr}(\G(\X))$ and $\chi(\X):=\chi(\G(\X))$\,.
Analogously, we set $ \Fa(\X):=\Fa(\G(\X))$ and we denote by $ \Fa^{\triangle}(\X)$ denotes the set of the triangular faces of $\X$\,, namely the set of faces $f\in\Fa(\X)$ with $\Per_\gr(f)=3$\,.
By \cite[Theorem 3.1]{DLF1}, for any $\X\in\A$ we have that
\begin{equation}\label{geom_deco}
\E_{\VV^\delta}(\X)=-3\sharp\X+\Per_\gr(\X)+\mu(\X)+3\chi(\X)+\E_{\mathrm{el}}(\X)\,,
\end{equation}
where 
\begin{equation*}
\mu(\X):=\sum_{f\notin \Fa^{\triangle}(\X)}(\Per_{\gr}(f)-3)\textrm{ and }\E_{\mathrm{el}}(\X):=\frac{1}{2}\sum_{\newatop{x,y\in\X}{1<|x-y|\le 1+\delta}}(1+\VV^{\delta}(|x-y|))\,.
\end{equation*}
In what follows, for every $\X\in\A$ we set
\begin{equation}\label{FV}
\F_{\VV^\delta}(\X):=\Per_\gr(\X)+\mu(\X)+3\chi(\X)+\E_{\mathrm{el}}(\X)\,,
\end{equation}
so that, in view of \eqref{geom_deco}, minimizing $\E_{\VV^\delta}$ in $\A_N$ is equivalent to minimizing $\F_{\VV^\delta}$ in $\A_N$\,.

Let $\X\in\A$ have simply closed polygonal boundary.
For every $x\in\partial\X$\,, let $\mathsf{I}^{\mathrm{bdry}}(x)$ and  $\mathsf{I}^{\mathrm{inner}}(x)$ be the sets of boundary and interior edges, respectively, emanating from $x$\,. Let moreover $\alpha(x)$ denote the inner angle spanned by the two boundary edges emanating from $x$\,. The following result is the analogous of \cite[Lemma 1]{Radin_soft}.
%%%%%%%%%%%%%%%%%%%%%%%%%%%%%%%%%%%%%%
%%%%%%%%%%%%%%%%%%%%%%%%%%%%%%%%%%%%%%
%%%%%%%%%%%%%%%%%%%%%%%%%%%%%%%%%%%%%%
\begin{figure}
    \centering
    \def\svgwidth{0.35\columnwidth}
    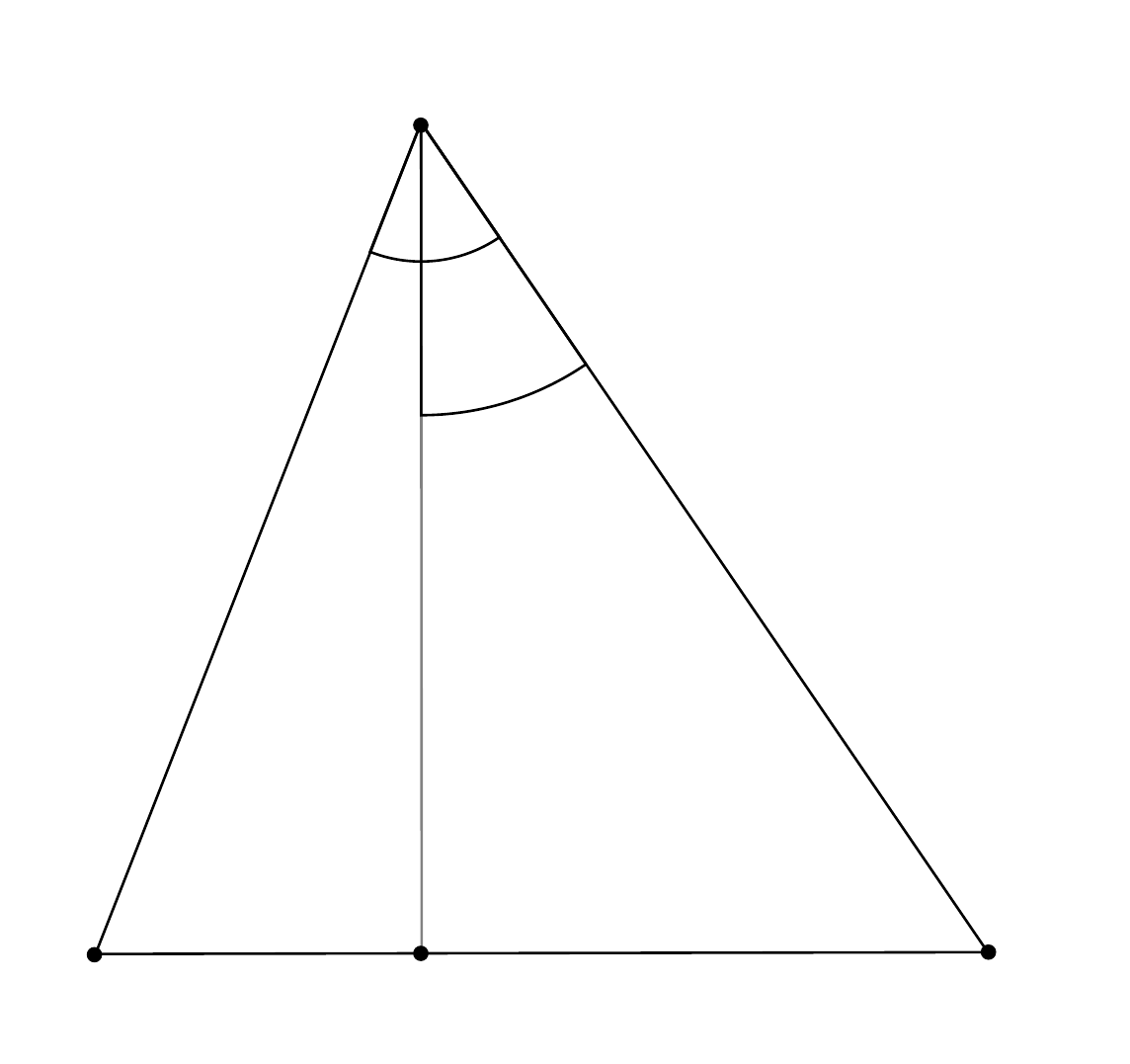

    \caption{Reference figure for Lemma \ref{lemma:minen}.}
    \label{fig:triangle}
\end{figure}
%%%%%%%%%%%%%%%%%%%%%%%%%%%%%%%%%%%%%%
%%%%%%%%%%%%%%%%%%%%%%%%%%%%%%%%%%%%%%
%%%%%%%%%%%%%%%%%%%%%%%%%%%%%%%%%%%%%%
\begin{lemma}\label{lemma:minen}
Let $\X\in\A$ have simply closed polygonal boundary. Then for every $x\in\partial \X$
\begin{equation}\label{true_claim}
\frac{1}{2}\sum_{e\in \mathsf{I}^{\mathrm{bdry}}(x)}\VV^{\delta}(|e|)+\sum_{e\in \mathsf{I}^{\mathrm{inner}}(x)}\VV^{\delta}(|e|)\ge- \frac{\alpha(x)}{\frac\pi 3}\,.
\end{equation}
Moreover, if equality in \eqref{true_claim} holds true then $\alpha(x)=(\sharp\mathsf{I}^{\mathrm{inner}}(x)+\sharp \mathsf{I}^{\mathrm{bdry}}(x)-1)\frac\pi 3$ and $|e|=1$ for all $e\in  \mathsf{I}^{\mathrm{bdry}}(x)\cup \mathsf{I}^{\mathrm{inner}}(x)$\,.
\end{lemma}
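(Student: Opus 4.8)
The plan is to \emph{localize} the estimate \eqref{true_claim} at the vertex $x$, splitting the inner angular region of $x$ into the elementary sectors cut out by the bonds emanating from $x$ and proving one inequality per sector (equivalently, one per triangle with apex $x$). Order the bonds at $x$ by angle. Because $\X$ has simply closed polygonal boundary, the two bonds of $\mathsf{I}^{\mathrm{bdry}}(x)$ are the angularly extreme ones and bound the region $O(\G(\X))$ near $x$, whereas all the bonds of $\mathsf{I}^{\mathrm{inner}}(x)$ point into this region. Listing the bonds in angular order as $e_0,e_1,\dots,e_k$, with $e_0,e_k\in\mathsf{I}^{\mathrm{bdry}}(x)$ and $k=\sharp\mathsf{I}^{\mathrm{inner}}(x)+1$, and denoting by $\gamma_i$ the angle between $e_{i-1}$ and $e_i$, we have $\sum_{i=1}^k\gamma_i=\alpha(x)$. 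Assigning to each sector one half of the potential of each of the (one or two) bonds delimiting it, the left-hand side of \eqref{true_claim} is rewritten \emph{exactly} as $\sum_{i=1}^k\frac12\big(\VV^\delta(|e_{i-1}|)+\VV^\delta(|e_i|)\big)$, since every interior bond delimits two sectors and every boundary bond exactly one. Thus \eqref{true_claim} follows by summing over $i$ the per-sector estimate
\begin{equation}\label{eq:persec}
\frac12\big(\VV^\delta(p)+\VV^\delta(q)\big)\ \ge\ -\frac{\gamma}{\pi/3},
\end{equation}
where $p,q\in[1,1+\delta]$ are the lengths of two consecutive bonds and $\gamma$ is the angle between them.

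It remains to prove \eqref{eq:persec}. The two far endpoints $y,z$ of the consecutive bonds are distinct points of $\X$, hence $|y-z|\ge1$ (smaller distances are forbidden by finiteness of the energy). For fixed $p,q$ the length $|y-z|$ is increasing in $\gamma$, so \eqref{eq:persec} is hardest at $|y-z|=1$, i.e.\ at $\gamma=\gamma_0(p,q):=\arccos\frac{p^2+q^2-1}{2pq}$. Inserting $\VV^\delta(r)=-1+\frac{r-1}{\delta}$, inequality \eqref{eq:persec} reduces to the scalar statement
\begin{equation}\label{eq:scalar}
h(p,q):=\gamma_0(p,q)-\frac\pi3+\frac{\pi}{6\delta}\big((p-1)+(q-1)\big)\ \ge\ 0\qquad\text{on }[1,1+\delta]^2,
\end{equation}
which holds with equality at $(1,1)$ since $\gamma_0(1,1)=\pi/3$.

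I would prove \eqref{eq:scalar} by showing that $h$ is nondecreasing in each variable, so that its minimum on the square is attained at the corner $(1,1)$. Writing $C=\frac{p^2+q^2-1}{2pq}$ and $C_p=\partial_pC=\frac{p^2-q^2+1}{2p^2q}$, one has $\partial_p h=-C_p/\sqrt{1-C^2}+\pi/(6\delta)$. On the square $C_p\le(1+\delta)^2/2$ (numerator $\le p^2\le(1+\delta)^2$, denominator $\ge2$), while $\sqrt{1-C^2}=\sin\gamma_0\ge\sin\frac{2\pi}7$, the latter because assumption \eqref{delta} forces every apex angle of such a triangle to exceed $2\pi/7$ — this is precisely the ``at most six nearest neighbors'' bound. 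Hence $\partial_p h\ge\pi/(6\delta)-(1+\delta)^2/(2\sin\frac{2\pi}7)\ge0$ as soon as $\delta(1+\delta)^2\le\frac\pi3\sin\frac{2\pi}7$; since the right-hand side is $\approx0.82$ whereas $\delta(1+\delta)^2<0.21$ throughout the admissible range \eqref{delta}, this holds comfortably, and $\partial_q h\ge0$ by symmetry. This single quantitative estimate guaranteeing monotonicity of $h$ is the only real obstacle; everything else is bookkeeping.

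Finally, the equality case is read off directly: equality in \eqref{true_claim} forces equality in every instance of \eqref{eq:persec}, hence $|y-z|=1$ and, by the strict positivity of the partials away from the corner, $p=q=1$ in each sector. Therefore every $\gamma_i=\gamma_0(1,1)=\pi/3$, all bonds at $x$ have length $1$, and $\alpha(x)=k\,\frac\pi3=(\sharp\mathsf{I}^{\mathrm{inner}}(x)+\sharp\mathsf{I}^{\mathrm{bdry}}(x)-1)\frac\pi3$, as claimed.
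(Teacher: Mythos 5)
Your proof is correct, and it shares the paper's basic localization: both arguments split the inner angle at $x$ into the sectors cut out by consecutive bonds, assign half of each bond's potential to each adjacent sector (so that the left-hand side of \eqref{true_claim} becomes exactly a sum over sectors), and reduce everything to a single triangle estimate that uses only the fact that the third side of each sector-triangle has length at least $1$. Where you genuinely diverge is in how that triangle estimate is established. The paper fixes the sector angle, writes it as $(1-z)\frac\pi3$, derives the lower bound $L\ge \frac{1}{2\sin((1-z)\frac\pi6)}$ for the \emph{longer} of the two bonds via an optimization over the foot-of-the-altitude angle $\bar\alpha$, handles the shorter bond trivially through $\VV^\delta\ge-1$, and is left with a one-variable inequality $g(z)\ge0$ proved by checking $g(0)=0$, $g'(0)\ge0$ and $g''\ge0$. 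You instead fix the two bond lengths $(p,q)$, use the law of cosines to identify the minimal admissible angle $\gamma_0(p,q)=\arccos\frac{p^2+q^2-1}{2pq}$, and are left with a two-variable inequality $h(p,q)\ge0$ on $[1,1+\delta]^2$, proved by monotonicity in each variable; both verifications draw on the same hypothesis \eqref{delta} (yours via $\sin\gamma_0\ge\sin\frac{2\pi}{7}$ and $\delta(1+\delta)^2\le\frac\pi3\sin\frac{2\pi}{7}$, the paper's via $g'(0)=\frac{\sqrt3\pi}{12\delta}-1\ge0$), and your numerical margins check out. What your symmetric version buys is a cleaner route to the equality case --- strict positivity of $\partial_p h$ and $\partial_q h$ forces $p=q=1$ and hence $\gamma=\pi/3$ in every sector at once --- and it avoids the slightly fiddly $\bar\alpha$-optimization; what the paper's asymmetric version buys is a purely one-dimensional calculus check.
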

%%%%%%%%%%%%%%%%%%%%%%%%%%%%%%%%%%%%%%
%%%%%%%%%%%%%%%%%%%%%%%%%%%%%%%%%%%%%%
%%%%%%%%%%%%%%%%%%%%%%%%%%%%%%%%%%%%%%
\begin{proof}
Let $x\in\partial \X$ be fixed and let
 $I(x):=\sharp \mathsf{I}^{\mathrm{bdry}}(x)+\sharp \mathsf{I}^{\mathrm{inner}}(x)$\,. Let moreover $\alpha_1(x)\,, \ldots,\alpha_{I(x)-1}$ denote the $I(x)-1$ angles spanned by the bonds in $\mathsf{I}^{\mathrm{bdry}}(x)\cup\mathsf{I}^{\mathrm{inner}}(x)$\,, in such a way that $\sum_{j=1}^{I(x)-1}\alpha_j(x)=\alpha(x)$\,. If $\alpha_j(x)\ge\frac{\pi}{3}$\,, then \eqref{true_claim} is trivially satisfied.
 Assume now that $\alpha_{\bar\jmath}(x)=(1-z)\frac{\pi}{3}$ for some $0\le z\le1-\frac{6}{\pi}\arcsin\frac{1}{2(1+\delta)}$ and
 let $l\le L$ be the lengths of the two bonds spanning $\alpha_{\bar\jmath}(x)$\,. 
 We notice that \eqref{true_claim} is proven if we show that
 \begin{equation}\label{claim_1}
 \frac 1 2\VV^\delta(L)\ge z-\frac{1}{2}\,.
 \end{equation}
 To this purpose, we first prove that
 \begin{equation}\label{boundlen}
 L\ge \frac{1}{2\sin\big((1-z)\frac\pi 6\big)}\,,
 \end{equation}
which, in view of the monotonicity of $\VV^\delta$\,, yields
\begin{equation}\label{boundlen2}
 \frac 1 2\VV^\delta(L)\ge \frac 1 2 \VV^\delta\Big(\frac{1}{2\sin\big((1-z)\frac\pi 6\big)}\Big)=-\frac 1 2-\frac{1}{2\delta}+\frac{1}{4\delta\sin\big((1-z)\frac\pi 6\big)}\,.
\end{equation}
 Indeed, let $\bar\alpha$ be the angle formed by the segment $AC$ and the segment $CH$ as in Figure \ref{fig:triangle}. Clearly $\bar\alpha\ge\frac{\alpha_{\bar\jmath}(x)}{2}$\,. Moreover, 
 \begin{equation}\label{trigo_1}
 L\cos\bar\alpha=l\cos(\alpha_{\bar\jmath}(x)-\bar\alpha)\,,
 \end{equation}
 and, since $\E_{\VV^\delta}(\X)<+\infty$\,, 
 we have that 
 \begin{equation}\label{trigo_2}
 L\sin\bar\alpha+l\sin(\alpha_{\bar\jmath}(x)-\bar\alpha)\ge 1\,.
 \end{equation}
 By \eqref{trigo_1} and \eqref{trigo_2}, we get
 \begin{equation}\label{deff}
 L\ge \Big(\sin\bar\alpha+\cos\bar\alpha\tan(\alpha_{\bar\jmath}(x)-\bar\alpha)
 \Big)^{-1}=:(f(\bar\alpha))^{-1}\,.
 \end{equation}
  Since 
  $f'(\bar\alpha)=-\cos\bar\alpha\tan(\alpha_{\bar\jmath}(x)-\bar\alpha)(\tan\bar\alpha+\tan(\alpha_{\bar\jmath}(x)-\bar\alpha))<0$ and $\bar\alpha\ge \frac{\alpha_{\bar\jmath}(x)}{2}$\,,
   we have that $f$ has a maximum at $\bar\alpha=\frac{\alpha_{\bar\jmath}(x)}{2}$ and that $f(\frac{\alpha_{\bar\jmath}(x)}{2})=2\sin(\frac{\alpha_{\bar\jmath}(x)}{2})$\,, thus giving \eqref{boundlen}.

 Now, with \eqref{boundlen2} in hand, we observe that claim \eqref{claim_1} is proven if we show that
 \begin{equation}\label{claim_2}
 g(z):=-\frac{1}{2\delta}+\frac{1}{4\delta\sin\big((1-z)\frac\pi 6\big)}-z\ge 0\,.
 \end{equation}
Notice that 
\[
g'(z)=\frac{\pi}{24\delta}\frac{\cos\big((1-z)\frac\pi 6\big)}{\sin^2\big((1-z)\frac\pi 6\big)}-1,
\]
$g'(0)=\frac{\sqrt{3}\pi}{12\delta}-1\ge 0$ for $0<\delta<\frac{1}{2\sin\frac{\pi}{7}}-1$\,, and
\[
g''(z)=\frac{\pi^2}{144\delta}\frac{1+\cos^2\big((1-z)\frac\pi 6\big)}{\sin^3\big((1-z)\frac\pi 6\big)}\ge 0\,.
\]
It follows that $g(z)$ is monotonically increasing in the interval $[0,1-\frac{6}{\pi}\arcsin\frac{1}{2(1+\delta)}]$\,, which together with the fact that $g(0)=0$ implies \eqref{claim_2}. This concludes the proof of \eqref{true_claim} and shows that if equality holds true in \eqref{true_claim}, then $\alpha_{j}(x)=\frac{\pi}{3}$ for every $j=1,\ldots,I(x)$\,. But this yields 
$$
-I(x)+1=\frac{1}{2}\sum_{e\in \mathsf{I}^{\mathrm{bdry}}(x)}\VV^{\delta}(|e|)+\sum_{e\in \mathsf{I}^{\mathrm{inner}}(x)}\VV^{\delta}(|e|)\ge -I(x)+1\,,
$$
and hence the inequality above is in fact an equality thus providing the last sentence in the statement.
\end{proof}
%%%%%%%%%%%%%%%%%%%%%%%%%%%%%%%%%%%%%%
%%%%%%%%%%%%%%%%%%%%%%%%%%%%%%%%%%%%%%
%%%%%%%%%%%%%%%%%%%%%%%%%%%%%%%%%%%%%%
The following result, which is a consequence of Lemma \ref{lemma:minen}, is the analogous of \cite[Lemma 4.2]{DLF1} in the soft affine case.
%%%%%%%%%%%%%%%%%%%%%%%%%%%%%%%%%%%%%%
%%%%%%%%%%%%%%%%%%%%%%%%%%%%%%%%%%%%%%
%%%%%%%%%%%%%%%%%%%%%%%%%%%%%%%%%%%%%%
\begin{lemma}\label{lb}
Let $\X\in\A$ be connected and have simple and closed polygonal boundary and suppose that $\X':=\X\setminus\partial\X$ is non-empty. Then,
\begin{equation}\label{form:lb}
\F_{\VV^\delta}(\X)\ge \F_{\VV^\delta}(\X')+6\,.
\end{equation}
Moreover, if equality holds true, then $\alpha(x)=(\sharp\mathsf{I}^{\mathrm{inner}}(x)+\sharp \mathsf{I}^{\mathrm{bdry}}(x)-1)\frac\pi 3$ for every $x\in\partial\X$\,,
 $|e|=1$ for every $e\in\bigcup_{x\in\partial\X}(\mathsf{I}^{\mathrm{bdry}}(x)\cup \mathsf{I}^{\mathrm{inner}}(x))$\,, and $\mu(\X)=\mu(\X')$\,.
\end{lemma}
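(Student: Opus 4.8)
The plan is to reduce \eqref{form:lb} to a lower bound on the pairwise energy difference and then to feed Lemma \ref{lemma:minen} in vertex by vertex. Comparing \eqref{geom_deco} with \eqref{FV} gives the identity $\F_{\VV^\delta}(\Y)=\E_{\VV^\delta}(\Y)+3\sharp\Y$ for every $\Y\in\A$; applying it to $\X$ and to $\X'$ and using $\sharp\X-\sharp\X'=\sharp\partial\X$ yields $\F_{\VV^\delta}(\X)-\F_{\VV^\delta}(\X')=\E_{\VV^\delta}(\X)-\E_{\VV^\delta}(\X')+3\sharp\partial\X$. Since the pairwise energy is a sum over bonds, $\E_{\VV^\delta}(\X)-\E_{\VV^\delta}(\X')$ equals the sum of $\VV^\delta(|e|)$ over all bonds $e$ with at least one endpoint in $\partial\X$, so \eqref{form:lb} amounts to showing that this bond sum is at least $-3\sharp\partial\X+6$.

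First I would regroup the bond sum by boundary vertex. Under the hypothesis that $\X$ has simple closed polygonal boundary I will use that there are no exterior wire edges, so every bond incident to $\partial\X$ is a boundary edge (both endpoints in $\partial\X$) or an interior edge; in particular $\sharp\mathsf{I}^{\mathrm{bdry}}(x)=2$ for each $x\in\partial\X$ and those two edges span exactly $\alpha(x)$. Assigning to each $x$ the quantity $\tfrac12\sum_{e\in\mathsf{I}^{\mathrm{bdry}}(x)}\VV^\delta(|e|)+\sum_{e\in\mathsf{I}^{\mathrm{inner}}(x)}\VV^\delta(|e|)$ and summing over $x\in\partial\X$, each boundary edge is counted with total weight $1$, each interior edge joining two boundary vertices (a chord) with total weight $2$, and each interior edge from $\partial\X$ into $\X'$ once. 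Since $\VV^\delta\le 0$ on its support, the chord over-count is nonpositive, so the genuine difference $\E_{\VV^\delta}(\X)-\E_{\VV^\delta}(\X')$ dominates $\sum_{x\in\partial\X}\big(\tfrac12\sum_{e\in\mathsf{I}^{\mathrm{bdry}}(x)}+\sum_{e\in\mathsf{I}^{\mathrm{inner}}(x)}\big)\VV^\delta(|e|)$.

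Next I would apply Lemma \ref{lemma:minen} to each $x\in\partial\X$, bounding its summand below by $-\alpha(x)/(\pi/3)$, and use the elementary fact that the inner angles of the simple closed polygon $\partial O(\G(\X))$ satisfy $\sum_{x\in\partial\X}\alpha(x)=(\sharp\partial\X-2)\pi$. Combining the two gives $\E_{\VV^\delta}(\X)-\E_{\VV^\delta}(\X')\ge-\tfrac{3}{\pi}(\sharp\partial\X-2)\pi=-3\sharp\partial\X+6$, which is precisely \eqref{form:lb}.

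For the equality case I would trace back the two inequalities used. Equality forces equality in Lemma \ref{lemma:minen} at every $x\in\partial\X$, which by that lemma gives the angle identity $\alpha(x)=(\sharp\mathsf{I}^{\mathrm{inner}}(x)+\sharp\mathsf{I}^{\mathrm{bdry}}(x)-1)\tfrac\pi3$ and $|e|=1$ for all edges incident to $\partial\X$; it also forces the chord over-count to vanish, and since any chord would then have length $1$ and hence $\VV^\delta=-1\neq0$, there can be no chords. The remaining point, which I expect to be the main obstacle, is deducing $\mu(\X)=\mu(\X')$: the length and angle conditions make every face incident to $\partial\X$ a unit equilateral triangle, so all non-triangular faces lie in the interior and, together with the absence of chords, should be identified with the non-triangular faces of $\X'$, giving $\mu(\X)=\mu(\X')$. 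Making this face-matching precise—and, upstream, justifying from the ``simple closed polygonal boundary'' hypothesis that no wire edges occur—is the delicate combinatorial bookkeeping on which the argument rests.
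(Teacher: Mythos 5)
Your proposal is correct and follows essentially the same route as the paper: rewrite $\F_{\VV^\delta}=\E_{\VV^\delta}+3\sharp\X$, regroup the bond-energy difference vertex by vertex (with the chord over-count absorbed by $\VV^\delta\le 0$), apply Lemma \ref{lemma:minen} at each boundary vertex, and close with the polygon angle sum $\sum_{x\in\partial\X}(\pi-\alpha(x))=2\pi$. Your treatment of the equality case (tracing back to forced unit lengths and $\pi/3$ angles, hence all boundary faces being unit equilateral triangles and $\mu(\X)=\mu(\X')$) is also exactly the paper's argument, and in fact you are slightly more explicit than the paper about the double-counting of chords.
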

%%%%%%%%%%%%%%%%%%%%%%%%%%%%%%%%%%%%%%
%%%%%%%%%%%%%%%%%%%%%%%%%%%%%%%%%%%%%%
%%%%%%%%%%%%%%%%%%%%%%%%%%%%%%%%%%%%%%
\begin{proof}
By \eqref{geom_deco}, we get
\begin{equation*}
\begin{aligned}
&\sum_{x\in\partial\X}\Big(\frac{1}{2}\sum_{e\in\mathsf{I}^{\mathrm{bdry}}(x)}\VV^\delta(|e|)+\sum_{e\in\mathsf{I}^{\mathrm{inner}}(x)}\VV^\delta(|e|)\Big)\\
\le&\,\E_{\VV^\delta}(\X)-\E_{\VV^\delta}(\X')
=-3\sharp\partial\X+\F_{\VV^\delta}(\X)-\F_{\VV^\delta}(\X')\,,
\end{aligned}
\end{equation*}
whence, using that $\X$ is connected, we deduce that
\begin{equation}\label{uno}
\F_{\VV^\delta}(\X)-\F_{\VV^\delta}(\X')
\ge 3\sharp\partial\X+\sum_{x\in\partial\X}\Big(\frac{1}{2}\sum_{e\in\mathsf{I}^{\mathrm{bdry}}(x)}\VV^\delta(|e|)+\sum_{e\in\mathsf{I}^{\mathrm{inner}}(x)}\VV^\delta(|e|)\Big)\,.
\end{equation}
%we highlight that if equality in \eqref{uno} holds true then $\chi(\G(\X'))=\chi(\G(\X))=1$\,.
By \eqref{uno}, in view of Lemma \ref{lemma:minen}, we obtain
\begin{equation}\label{due}
%\begin{aligned}
\F_{\VV^\delta}(\X)-\F_{\VV^\delta}(\X')\ge3\sum_{x\in\partial\X}\Big(1-\frac{\alpha(x)}{\pi}\Big)=6\,,
%\end{aligned}
\end{equation}
where in the last equality we have used the fact that $\X$ has simple and closed boundary and the Gauss-Bonnet Theorem to deduce that $\sum_{x\in\partial\X}(\pi-\alpha(x))=2\pi$\,. Therefore, \eqref{form:lb} is proven.
%Now, the claim \eqref{form:lb} is equivalent to \eqref{due};

Moreover, if the equality holds true, then we should have that \eqref{true_claim} is satisfied with equality for every $x\in\partial\X$\,; by Lemma \ref{lemma:minen} this implies that
$\alpha(x)=(\sharp\mathsf{I}^{\mathrm{inner}}(x)+\sharp \mathsf{I}^{\mathrm{bdry}}(x)-1)\frac\pi 3$ for every $x\in\partial\X$ and
$|e|=1$ for every $x\in\partial\X$ and for any $e\in \mathsf{I}^{\mathrm{bdry}}(x)\cup \mathsf{I}^{\mathrm{inner}}(x)$\,. It follows that all the faces lying on the boundary are equilateral  triangles with unitary side-length. In particular, if equality in \eqref{form:lb} holds true, then $\mu(\X)=\mu(\X')$ so that also the last sentence in the statement is proven.
\end{proof}
%%%%%%%%%%%%%%%%%%%%%%%%%%%%%%%%%%%%%%
%%%%%%%%%%%%%%%%%%%%%%%%%%%%%%%%%%%%%%
%%%%%%%%%%%%%%%%%%%%%%%%%%%%%%%%%%%%%%
%%%%%%%%%%%%%%%%%%%%%%%%%%%%%%%%%%%%%%
%%%%%%%%%%%%%%%%%%%%%%%%%%%%%%%%%%%%%%
%%%%%%%%%%%%%%%%%%%%%%%%%%%%%%%%%%%%%%
%%%%%%%%%%%%%%%%%%%%%%%%%%%%%%%%%%%%%%
%%%%%%%%%%%%%%%%%%%%%%%%%%%%%%%%%%%%%%
%%%%%%%%%%%%%%%%%%%%%%%%%%%%%%%%%%%%%%
In what follows, for every $s\in\N$\,, we denote by $H_s$ the regular hexagon with side-length $s$\,, centered at the origin, and with two horizontal sides. If $s=0$\,, then we set $H_0:=\{0\}$\,.
%%%%%%%%%%%%%%%%%%%%%%%%%%%%%%%%%%%%%%
%%%%%%%%%%%%%%%%%%%%%%%%%%%%%%%%%%%%%%
%%%%%%%%%%%%%%%%%%%%%%%%%%%%%%%%%%%%%%
\begin{definition}[{\bf Canonical configuration}]\label{canondef}
Let $N\in\N$\,. 
If $N=3s^2+3s+1+(s+1)k+j$\,, with $s,k,j\in\N\cup\{0\}$\,,
$0\le k\le 5$ and $0\le j\le s$
\,, then the canonical configuration is given by
\begin{equation*}
\begin{aligned}
\overline\X_N:=&\,\big(H_s\cap\mathcal{T}\big)\cup\Big\{
e^{ir\frac\pi 3}(\alpha_1+\alpha_2e^{i\frac\pi 3})\,:\,\alpha_1,r\in \N_0\,, \alpha_2\in\N\,, \alpha_1+\alpha_2=s+1\,, 0\le r\le k-1
\Big\}\\
&\phantom{\,\big(H_s\cap\mathcal{T}\big)} \cup\Big\{
e^{ik\frac\pi 3}(\alpha_1+\alpha_2e^{i\frac\pi 3})\,:\,\alpha_1\in \N_0\,, \alpha_2\in\N\,, \alpha_1+\alpha_2=s+1\,, \alpha_2\le j
\Big\}\,.
\end{aligned}
\end{equation*}
This amounts to considering a big regular hexagon with side length $s$ filled with particles, plus $k$ additional full sides, plus a final partially filled side with $j$ particles (see Figure \ref{fig:canonical}).
%%%%%%%%%%%%%%%%%%%%%%%%%%%%%%%%%%%%%%
%%%%%%%%%%%%%%%%%%%%%%%%%%%%%%%%%%%%%%
%%%%%%%%%%%%%%%%%%%%%%%%%%%%%%%%%%%%%%
\begin{figure}
    \centering
\includegraphics[width=0.5\textwidth]{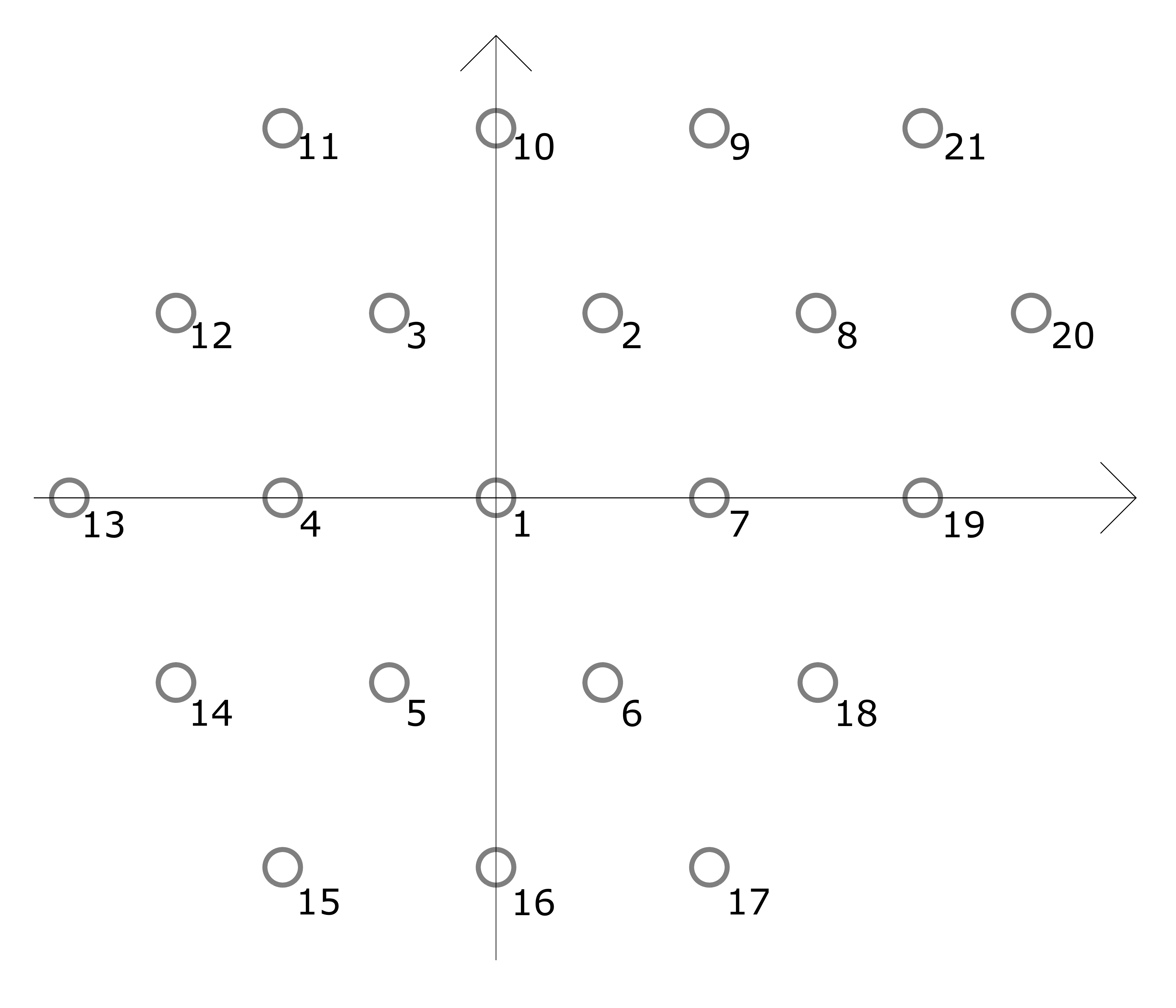}
    \caption{The canonical configurations for $N=1,\ldots,21$\,.}
    \label{fig:canonical}
\end{figure}
%%%%%%%%%%%%%%%%%%%%%%%%%%%%%%%%%%%%%%
%%%%%%%%%%%%%%%%%%%%%%%%%%%%%%%%%%%%%%
%%%%%%%%%%%%%%%%%%%%%%%%%%%%%%%%%%%%%%
By construction, $\mu(\overline\X_N)=\E_{\mathrm{el}}(\overline\X_N)=\sharp\Ed^{\Wi}(\overline{\X}_N)=0$\,, $\chi(\overline{\X}_N)=1$ and 
\begin{equation}\label{percan}
\Per_{\gr}(\overline{\X}_N)=\sharp\partial\overline\X_N=\left\{
\begin{array}{ll}
6s&\textrm{if }N=3s^2+3s+1\\
6s+k+1&\textrm{otherwise}.
\end{array}
\right.
\end{equation}
\end{definition}
%%%%%%%%%%%%%%%%%%%%%%%%%%%%%%%%%%%%%%
%%%%%%%%%%%%%%%%%%%%%%%%%%%%%%%%%%%%%%
%%%%%%%%%%%%%%%%%%%%%%%%%%%%%%%%%%%%%%
\begin{lemma}\label{altribordi}
    For every $N\in\N$, let $\tilde N:=N-\sharp \partial\overline\X_N$. If $N\neq 9$ then the following inequalities hold:
    \begin{enumerate}
        \item $\sharp\partial\overline\X_N\le\sharp\partial\overline\X_{\tilde N}+7$\,.
        \item $\sharp\partial\overline\X_N\le\sharp\partial\overline\X_{\tilde N+1}+6$\,.
    \end{enumerate}
\end{lemma}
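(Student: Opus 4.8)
The plan is to reduce the statement to an explicit arithmetic verification on the step function $b(N):=\sharp\partial\overline\X_N$ recorded in \eqref{percan}. Write $N$ in its canonical form $N=N_s+(s+1)k+j$ with $N_s:=3s^2+3s+1$, $0\le k\le 5$, $0\le j\le s$, as in Definition \ref{canondef}. The first step is to compute $\tilde N=N-b(N)$ in closed form. When $(k,j)=(0,0)$ one gets $b(N)=6s$ and $\tilde N=N_{s-1}$ (peeling the outer ring of a full hexagon returns the full hexagon of side $s-1$); in all other cases $b(N)=6s+k+1$ gives $\tilde N=N_{s-1}+sk+j-1$. I would then relocate each $\tilde N$ in \emph{its own} canonical form relative to the hexagon of side $s-1$, whose full added side now has length $s$. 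I expect the three regimes $(k,j)=(0,0)$, $j\ge 1$, and $j=0,\,k\ge 1$ to produce canonical parameters $(s-1,0,0)$, $(s-1,k,j-1)$ and $(s-1,k-1,s-1)$ respectively, each of which is immediately checked to be admissible (the constraints $0\le k'\le 5$, $0\le j'\le s-1$ hold because $0\le k\le 5$ and $1\le j\le s$).

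Once $\tilde N$ and $\tilde N+1$ are placed in their shells, the second step is to read off $b(\tilde N)$ and $b(\tilde N+1)$ from \eqref{percan} and substitute into the two inequalities. For $s\ge 1$ I expect this to show that $b(N)-b(\tilde N)$ always equals $6$ or $7$, with the value $7$ occurring exactly at the ``corner'' cases $(k,j)=(0,1)$ and $(k,0)$ with $k\ge 1$; this yields (1). Likewise $b(N)-b(\tilde N+1)$ comes out at most $6$, giving (2). The arithmetic is routine and uniform in $s,k$ in each regime; the only points requiring a little attention are the shell interface $j=s$, where $\tilde N+1$ may jump to parameters $(s-1,k+1,0)$ or, for $k=5$, all the way up to the full hexagon $N_s$, and the small values $s\le 1$ where some subcase ranges degenerate but the inequalities only improve.

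The main obstacle is the degenerate case in which peeling the boundary collapses an entire shell. This occurs precisely in the regime $j=0,\,k\ge 1$ when $(k-1,s-1)=(0,0)$, i.e. $s=1$, $k=1$, which is exactly $N=9$: there $\tilde N=1$, $b(\tilde N)=0$ while $b(N)=8$, so (1) fails by one — this is the reason for the hypothesis $N\neq 9$. A second, benign degeneracy is the lowest shell $s=0$ (that is $1\le N\le 6$), where $\tilde N$ can equal $0$; these finitely many cases I would settle by direct inspection from the values $b(1)=0$, $b(N)=N$ for $2\le N\le 6$, adopting the convention $\sharp\partial\overline\X_0:=0$, and one checks (1) and (2) directly (with equality in (2) at $N=6$). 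With these two degeneracies isolated, the computation of the previous two paragraphs closes both inequalities.

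\begin{proof}
We verify both inequalities directly from the explicit formula \eqref{percan}, writing $b(N):=\sharp\partial\overline\X_N$, $N_s:=3s^2+3s+1$, and adopting the convention $\sharp\partial\overline\X_0:=0$.

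\textbf{Small shell $s=0$ (that is $1\le N\le 6$).} Here $b(1)=0$ and $b(N)=N$ for $2\le N\le 6$. If $N=1$ then $\tilde N=1$ and both inequalities are trivial. If $2\le N\le 6$ then $\tilde N=N-b(N)=0$, so $b(\tilde N)=0$ and $b(\tilde N+1)=b(1)=0$, whence $b(N)=N\le 7$ and $b(N)=N\le 6$, giving (1) and (2) (with equality in (2) at $N=6$).

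\textbf{General shell $s\ge 1$.} Write $N=N_s+(s+1)k+j$ with $0\le k\le 5$, $0\le j\le s$, and recall $N_s-N_{s-1}=6s$, so that $N_{s-1}=3s^2-3s+1$ and a full added side of the hexagon of side $s-1$ contains $s$ points. We distinguish three cases.

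\emph{Case $(k,j)=(0,0)$.} Then $b(N)=6s$ and $\tilde N=N_s-6s=N_{s-1}$, so $b(\tilde N)=6(s-1)$ and
\[
b(N)-b(\tilde N)=6s-6(s-1)=6\le 7 .
\]
For $\tilde N+1=N_{s-1}+1$: if $s\ge 2$ this has parameters $(s-1,0,1)$, so $b(\tilde N+1)=6(s-1)+1$ and $b(N)-b(\tilde N+1)=6s-(6s-5)=5\le 6$; if $s=1$ then $\tilde N+1=2$ and $b(N)-b(\tilde N+1)=6-2=4\le 6$.

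\emph{Case $j\ge 1$.} Then $b(N)=6s+k+1$ and
\[
\tilde N=N_{s-1}+sk+j-1=N_{s-1}+sk+(j-1),
\]
which has parameters $(s-1,k,j-1)$ since $0\le j-1\le s-1$. Hence $b(\tilde N)=6(s-1)$ if $(k,j)=(0,1)$ and $b(\tilde N)=6(s-1)+k+1$ otherwise. Therefore
\[
b(N)-b(\tilde N)=
\begin{cases}
(6s+1)-6(s-1)=7, & (k,j)=(0,1),\\[2pt]
(6s+k+1)-(6(s-1)+k+1)=6, & \text{otherwise,}
\end{cases}
\]
so (1) holds. For $\tilde N+1=N_{s-1}+sk+j$: if $j\le s-1$ it has parameters $(s-1,k,j)$ with $(k,j)\neq(0,0)$, so $b(\tilde N+1)=6(s-1)+k+1$ and $b(N)-b(\tilde N+1)=6$; if $j=s$ and $k\le 4$ it has parameters $(s-1,k+1,0)$, so $b(\tilde N+1)=6(s-1)+k+2$ and $b(N)-b(\tilde N+1)=5$; if $j=s$ and $k=5$ then $sk+j=6s$, i.e. $\tilde N+1=N_s$ and $b(\tilde N+1)=6s$, so $b(N)-b(\tilde N+1)=(6s+6)-6s=6$. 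In all subcases (2) holds.

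\emph{Case $j=0$, $k\ge 1$.} Then $b(N)=6s+k+1$ and
\[
\tilde N=N_{s-1}+sk-1=N_{s-1}+s(k-1)+(s-1),
\]
which has parameters $(s-1,k-1,s-1)$. If $(k-1,s-1)\neq(0,0)$ then $b(\tilde N)=6(s-1)+k$, so
\[
b(N)-b(\tilde N)=(6s+k+1)-(6(s-1)+k)=7\le 7,
\]
giving (1). The only excluded subcase is $(k-1,s-1)=(0,0)$, i.e. $s=1$, $k=1$, which is $N=N_1+2=9$; this is ruled out by hypothesis (there $\tilde N=1$, $b(\tilde N)=0$ and $b(N)=8$, so (1) would fail). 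Finally $\tilde N+1=N_{s-1}+sk$ has parameters $(s-1,k,0)$ with $k\ge 1$, so $b(\tilde N+1)=6(s-1)+k+1$ and $b(N)-b(\tilde N+1)=6\le 6$, giving (2).

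Combining all cases proves both (1) and (2) for every $N\neq 9$.
\end{proof}
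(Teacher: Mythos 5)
Your proposal is correct and follows essentially the same route as the paper: both proofs reduce the claim to the explicit formula \eqref{percan}, write $N$ in canonical form $N=3s^2+3s+1+(s+1)k+j$, identify the canonical parameters of $\tilde N$ (and $\tilde N+1$) in the shell of side $s-1$, and check that the difference of boundary counts is $6$ or $7$ case by case, with $N=9$ as the unique degenerate exception. The only cosmetic differences are your slightly different grouping of the cases, your explicit convention $\sharp\partial\overline\X_0:=0$ for $N\le 6$ (which the paper dismisses as trivial), and your direct computation of $\sharp\partial\overline\X_{\tilde N+1}$ in all cases where the paper invokes monotonicity to reduce (ii) to the difference-$7$ cases.
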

\begin{proof}
    First we observe that in the case $N\le 6$ both inequalities are trivially satisfied, so that we can focus on the case $N\ge 7$ for the rest of the proof.

    (i) We divide the proof in a few cases. If $N=3s^2+3s+1$ with $s\ge 1$\,, by \eqref{percan}, we have that $\tilde N=3(s-1)^2+3(s-1)+1$ and hence, again by \eqref{percan}, $\sharp\partial\overline\X_N-\sharp\partial\overline\X_{\tilde N}=6$\,, which proves the claim (i) in this case. %\eqref{altribordi} in this case.
    %\vskip3pt
    
    Let us now consider the case $N=3s^2+3s+1+j$ with $s\ge 1$ and $1\le j\le s$\,. Then, by \eqref{percan}, $\sharp\partial\overline\X_N=6s+1$\,,
    $\tilde N=3(s-1)^2+3(s-1)+1+j-1$\,, 
    and $\sharp\partial\overline\X_{\tilde N}=6(s-1)$ if $j=1$ and
    $\sharp\partial\overline\X_{\tilde N}=6(s-1)+1$ if $j\ge 2$\,.
    Therefore, $\sharp\partial\overline\X_N-\sharp\partial\overline\X_{\tilde N}=6$ for $j\ge 2$\,, and $\sharp\partial\overline\X_N-\sharp\partial\overline\X_{\tilde N}=7$\,, for $j=1$\,.
    This proves the claim (i) 
    %\eqref{altribordi}
    also for such a range of parameters.
    %\vskip3pt
    
    Now we pass to the case $N=3s^2+3s+1+(s+1)k$ with $s\ge 1$\,, $1\le k\le 5$ and $(s;k)\neq (1;1)$ (the case $s=k=1$ gives $N=9$)\,. Then, by \eqref{percan},
    $\sharp\partial\overline\X_N=6s+k+1$\,, 
    $\tilde N=3(s-1)^2+3(s-1)+1+(s-1+1)(k-1)+s-1$\,, and $\sharp\partial\overline\X_{\tilde N}=6(s-1)+k$\,,
    so that $\sharp\partial\overline\X_N-\sharp\partial\overline\X_{\tilde N}=7$\,, thus proving (i) %\eqref{altribordi} 
    also in this case.
    %\vskip3pt
    
    Finally, we discuss the case $N=3s^2+3s+1+(s+1)k+j$ with $s\ge 1$\,, $1\le k\le 5$\,, and $1\le j\le s$\,. Then, by \eqref{percan}, 
    $\sharp\partial\overline\X_N=6s+k+1$\,,
    $\tilde N=3(s-1)^2+3(s-1)+(s-1+1)k+j-1$\,,
    and $\sharp\partial\overline\X_{\tilde N}=6(s-1)+k+1$\,.
    It follows that $\sharp\partial\overline\X_N-\sharp\partial\overline\X_{\tilde N}=6$\,, thus concluding the proof of (i).
   % \eqref{altribordi}.

    (ii) Retracing the steps of the proof of (i), we see that the only cases where we need to prove something is when $\sharp\partial\overline\X_N-\sharp\partial\overline\X_{\tilde N}=7$, since in all the other cases this difference is 6, and (ii) follows from the monotonicity inequality $\sharp\partial\overline\X_{\tilde N}\le \sharp\partial\overline\X_{\tilde N+1}$. The cases in which the difference is 7 are: either $N=3s^2+3s+1+1$; or $N=3s^2+3s+1+(s+1)k$ with $s\ge 1$\,, $1\le k\le 5$ and $(s;k)\neq (1;1)$\,. 
    
    In the first case, $\tilde N+1=3(s-1)^2+3(s-1)+1+1$, so that by \eqref{percan} it follows that $\sharp\partial\overline\X_{\tilde N+1}=6(s-1)+1$,  while $\sharp\partial\overline\X_N=6s+1$, so that the claim (ii) follows. 
    
    In the second case, $\tilde N+1=3(s-1)^2+3(s-1)+1+(s-1+1)k$, so that by \eqref{percan} we obtain $\sharp\partial\overline\X_{\tilde N+1}=6(s-1)+k+1$, while $\sharp\partial\overline\X_N=6s+k+1$, and the claim (ii) follows also in this case.
    This concludes the proof of the whole lemma.
\end{proof}
%%%%%%%%%%%%%%%%%%%%%%%%%%%%%%%%%%%%%%
%%%%%%%%%%%%%%%%%%%%%%%%%%%%%%%%%%%%%%
%%%%%%%%%%%%%%%%%%%%%%%%%%%%%%%%%%%%%%
%%%%%%%%%%%%%%%%%%%%%%%%%%%%%%%%%%%%%%
%%%%%%%%%%%%%%%%%%%%%%%%%%%%%%%%%%%%%%
\begin{theorem}\label{crystHR}
Let $N\in\N$ and let $\X_N$ be a minimizer of $\E_{\VV^\delta}$ in $\mathcal{A}_N$. Then $\G(\X_N)$ is connected and, up to rotation and translation, $\X_N$ is a subset of the regular triangular lattice with lattice spacing 1.
Furthermore, $\overline\X_N$ is a minimizer of $\E_{\VV^\delta}$ in $\A_N$\,.
Moreover, if $N\ge 3$\,, then $O(\G(\X_N))$ has simple and closed polygonal boundary, $\Fa(\G(\X_N))=\Fa^{\triangle}(\G(\X_N))$ and $\sharp\Ed^{\mathrm{wire,ext}}(\G(\X_N))=0$\,. 
\end{theorem}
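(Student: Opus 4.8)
The plan is to argue by strong induction on $N$, establishing simultaneously the sharp lower bound $\F_{\VV^\delta}(\X)\ge\sharp\partial\overline\X_N+3$ for \emph{every} $\X\in\A_N$ together with the rigidity that holds at equality. Since Definition \ref{canondef} gives $\mu(\overline\X_N)=\E_{\mathrm{el}}(\overline\X_N)=0$, $\chi(\overline\X_N)=1$ and $\Per_\gr(\overline\X_N)=\sharp\partial\overline\X_N$, one has $\F_{\VV^\delta}(\overline\X_N)=\sharp\partial\overline\X_N+3$; hence the bound immediately yields that $\overline\X_N$ is a minimizer and that every minimizer $\X_N$ realizes $\F_{\VV^\delta}(\X_N)=\sharp\partial\overline\X_N+3$, from which all remaining assertions are read off through the equality analysis. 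I would settle the base cases $N\le 2$ (and a handful of further small values) by hand, and treat $N=9$ on its own, since it is excluded from Lemma \ref{altribordi}.

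For the inductive step I would first reduce to the situation where $\G(\X)$ is connected, carries no wire edges and has $O(\G(\X))$ with simple closed polygonal boundary. Connectedness follows because each term of the decomposition \eqref{geom_deco} is additive over connected components, so if $\X$ splits into pieces of sizes $N_1,\dots,N_c$ the inductive bound gives $\F_{\VV^\delta}(\X)\ge\sum_i\big(\sharp\partial\overline\X_{N_i}+3\big)$, which strictly exceeds $\sharp\partial\overline\X_N+3$ once $c\ge 2$, using the superadditivity $\sum_i\sharp\partial\overline\X_{N_i}\ge\sharp\partial\overline\X_N$ (a consequence of the concavity of $M\mapsto\sharp\partial\overline\X_M$) together with the surplus $3(c-1)$. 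Exterior wire edges contribute $2$ each to $\Per_\gr$ and non-simple boundaries inflate both $\Per_\gr$ and $\chi$, so both features can be shown to raise the energy. This bookkeeping is where I expect most of the technical effort to go.

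Once $\X$ is connected with simple closed polygonal boundary, I write $b:=\sharp\partial\X$, so that $\F_{\VV^\delta}(\X)=b+\mu(\X)+\E_{\mathrm{el}}(\X)+3$, and split into two regimes. If $b\ge\sharp\partial\overline\X_N$ the bound is immediate from $\F_{\VV^\delta}(\X)\ge b+3$. If $b<\sharp\partial\overline\X_N$ and $\X':=\X\setminus\partial\X\neq\emptyset$, I apply Lemma \ref{lb} and the inductive hypothesis to obtain $\F_{\VV^\delta}(\X)\ge\F_{\VV^\delta}(\X')+6\ge\sharp\partial\overline\X_{N-b}+9$; since $b\le\sharp\partial\overline\X_N-1$ forces $N-b\ge\tilde N+1$ with $\tilde N=N-\sharp\partial\overline\X_N$, monotonicity of $M\mapsto\sharp\partial\overline\X_M$ together with Lemma \ref{altribordi}(ii) gives $\sharp\partial\overline\X_{N-b}\ge\sharp\partial\overline\X_{\tilde N+1}\ge\sharp\partial\overline\X_N-6$, whence $\F_{\VV^\delta}(\X)\ge\sharp\partial\overline\X_N+3$; the degenerate subcase $\X'=\emptyset$ is disposed of by $b=N\ge\sharp\partial\overline\X_N$. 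This completes the lower bound, and hence proves that $\overline\X_N$ is a minimizer.

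Finally, for the rigidity at equality I would trace $\F_{\VV^\delta}(\X_N)=\sharp\partial\overline\X_N+3$ back through the chain above. In the regime $b\ge\sharp\partial\overline\X_N$, equality forces $b=\sharp\partial\overline\X_N$ and $\mu(\X_N)=\E_{\mathrm{el}}(\X_N)=0$, i.e.\ all faces are unit equilateral triangles, placing $\X_N$ in the triangular lattice. In the regime $b<\sharp\partial\overline\X_N$, the equality case of Lemma \ref{lb} forces all boundary faces to be unit equilateral triangles, $\mu(\X_N)=\mu(\X')$, and $\X'$ to be a minimizer in $\A_{N-b}$; by induction $\X'$ is a lattice subset with $\mu(\X')=0$, so $\mu(\X_N)=0$, and the boundary layer attaches through unit triangles, again placing $\X_N$ in the lattice. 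The structural conclusions for $N\ge 3$, namely $\Fa(\G(\X_N))=\Fa^{\triangle}(\G(\X_N))$, $\sharp\Ed^{\mathrm{wire,ext}}=0$ and simple closed polygonal boundary, are exactly the reductions shown to be forced at equality, while connectedness is the first reduction step. I expect the \emph{main obstacle} to be precisely this rigidity stage: verifying that the equality cases propagate consistently across the induction and cleanly dispatching the exceptional values $N\le 2$ and $N=9$.
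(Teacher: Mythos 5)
Your proposal follows essentially the same skeleton as the paper: the decomposition \eqref{geom_deco}, the boundary-peeling estimate of Lemma \ref{lb}, the combinatorial comparison of Lemma \ref{altribordi}(ii), induction on $N$ with $N=9$ set aside, and a rigidity analysis at equality. The chain $\F_{\VV^\delta}(\X)\ge\F_{\VV^\delta}(\X')+6\ge\sharp\partial\overline\X_{N-b}+9\ge\sharp\partial\overline\X_N+3$ in your second regime is exactly the paper's \eqref{incrocio}, merely phrased as a universal lower bound rather than as a contradiction derived from $\sharp\partial\X_N\le\sharp\partial\overline\X_N-1$ for a minimizer. The one genuine divergence is where you locate the work: because you insist on a lower bound valid for \emph{every} $\X\in\A_N$, you must control disconnected configurations, wire edges and pinched boundaries by energy accounting. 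The paper never needs this: it only analyzes minimizers, and disposes of these degeneracies in one line by translating a connected component (resp.\ rotating a piece of $O(\G(\X_N))$ about a self-intersection point) until a new bond of length $1$ is created, strictly decreasing the energy. Since the inductive hypothesis ``$\overline\X_M$ is a minimizer in $\A_M$'' already supplies the universal bound $\F_{\VV^\delta}(Y)\ge\F_{\VV^\delta}(\overline\X_M)$ needed when $\X'$ is degenerate, nothing is lost, and your ``bookkeeping'' stage is avoided entirely. A small compensating gain on your side: your two-regime argument handles $N=5,6$ without the separate wedge argument the paper uses to show $\sharp\partial\X_N=N$ there.

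One concrete error to fix: the superadditivity $\sum_i\sharp\partial\overline\X_{N_i}\ge\sharp\partial\overline\X_N$ is false. Already for two singleton components one has $\sharp\partial\overline\X_1+\sharp\partial\overline\X_1=0<2=\sharp\partial\overline\X_2$, since an isolated point has no boundary edges; the map $M\mapsto\sharp\partial\overline\X_M$ is not superadditive (it vanishes at $M=1$ and is not concave through the origin in the relevant sense). What you actually need is the weaker inequality $\sum_i\sharp\partial\overline\X_{N_i}+3(c-1)>\sharp\partial\overline\X_N$ for $c\ge2$, which does hold (the increments of $M\mapsto\sharp\partial\overline\X_M$ are $0$ or $1$, so small components cost at most their cardinality, and large components are covered by the $\sqrt{M}$-type growth), but this is a case analysis you have not carried out and should not attribute to concavity. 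If you keep the universal-lower-bound formulation, this step, together with the non-simple-boundary case, is where the missing work lies; alternatively, adopt the paper's variational dismissal of these cases and the issue disappears.
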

%%%%%%%%%%%%%%%%%%%%%%%%%%%%%%%%%%%%%%
%%%%%%%%%%%%%%%%%%%%%%%%%%%%%%%%%%%%%%
%%%%%%%%%%%%%%%%%%%%%%%%%%%%%%%%%%%%%%
\begin{proof}
We preliminarily notice that the claim is satisfied for $N=1$ and for $N=2$\,. In the latter case the minimizer is given by two points at distance equal to one (i.e., by the canonical configuration $\overline\X_2$). Therefore we focus on the case $N\ge 3$\,. 

First, we observe that $\G(\X_N)$ is connected, since otherwise we could translate one of its connected components until we create a new bond of length 1, thus strictly decreasing the energy. 
Analogously, it is easy to see that $\G(\X_N)$ does not contain wire edges.
Moreover, $\G(\X_N)$ has simply closed polygonal boundary $\Gamma$: if not, we could choose a self-intersection point $p$ of $\Gamma$ and rotate one of the components of $O(\G(\X_N))\setminus\{p\}$ around $p$, until we form another bond of length one, strictly decreasing the energy.

It is immediate to check that for $N=3$ and $N=4$\,, the unique (up to rotations and translations) minimizer is given by the canonical configuration.
Let us discuss the case $N=5,6$\,, by showing first that $\sharp\partial\X_N=N$\,. To this aim, we fix any point $\bar x\in\X_N$, and we consider the half-lines $\ell_1,\ldots,\ell_{N-1}$ starting from $\bar x$ and passing through the other points of $\X_N$ (even those not connected to $\bar x$ by a bond). Let moreover $\alpha_1(\bar x)\le \ldots\le \alpha_{N-1}(\bar x)$ denote the amplitude of the angles formed by  two consecutive half-lines. Then $\alpha_{N-1}(\bar x)\ge \frac{2\pi}{N-1}\ge \frac{2\pi}{5}$\,. If $W$ denotes the corresponding open wedge delimited by the half-lines defining $\alpha_{N-1}(\bar x)$\,, then we have that $W\cap O(\G)=\emptyset$\,, since the maximum angle that can appear in a triangular face is smaller than $\tfrac{2\pi}{5}$\,. This directly proves that $\bar x$ is not an interior point, thus showing that $\sharp\partial\X_N=N$\,. 

Since $\sharp\partial\X_N=N\ge \sharp\partial\overline{\X}_N=\Per_{\gr}(\overline\X_N)=(\Per_{\gr}+\mu+\E_{\mathrm{el}})(\overline\X_N)$\,, we get that $\mu(\X_N)=\E_{\mathrm{el}}(\X_N)=0$\,, i.e., the claim.

Finally, we consider $N\ge 7$\,, and we prove the statement by induction on $N$.
We 
first show that $\sharp\partial\X_N\ge\sharp\partial\overline{\X}_N$\,.
Indeed, assume by contradiction that \begin{equation}\label{falsa}
\sharp\partial\X_N\le\sharp\partial\overline{\X}_N-1\,.
\end{equation}
Since $N\ge 7$\,, we have that $N':=N-\sharp\partial\X_N\ge N-\sharp\partial\overline{\X}_N+1\ge 2$\,.
Moreover, we set $\tilde N:=N-\sharp\partial\overline\X_N\le  N'-1$\,. 
Assume first that $N\neq 9$\,.
Then, by Lemma \ref{altribordi}(i), we have that $\sharp\partial\overline\X_N\le\sharp\partial\overline\X_{\tilde N}+7\le \sharp\partial\overline\X_{N'}+7$\,, so that $\sharp\partial\X_N\le\sharp\partial\overline{\X}_{N'}+6$\,.
Recall that $\X'_{N}=\X_N\setminus\partial\X_N$\,, so that $\sharp\X'_{N}=N'$\,.
By Lemma \ref{lb} and using the inductive assumption that $\overline{\X}_{N'}$ is a minimizer of $\E_{\VV^\delta}$ (and hence of $\F_{\VV^\delta}$) in $\A_{N'}$\,, we thus deduce that
\begin{equation}\label{incrocio}
\F_{\VV^\delta}(\X_N)\ge \F_{\VV^\delta}({\X}'_N)+6\ge \F_{\VV^\delta}(\overline{\X}_{N'})+6\ge \F_{\VV^\delta}(\overline\X_N)\,,
\end{equation}
where in the last inequality we have used that $\sharp\partial\overline{\X}_N\le\sharp\partial\overline{\X}_{N'}+6$ and 
Lemma \ref{altribordi}.
%that the canonical configurations are nested.
It follows that all the inequalities above are 
 actually equalities, since $\X_N$ is a minimizer. In particular, $\mu(\X_N)=0$ and $|e|=1$ for every $e\in\Ed(\X_N)$\,; but this implies that $\sharp\partial\X_N=\sharp\partial\overline{\X}_{N}$\,, thus contradicting \eqref{falsa}.

Let us consider now the case $N=9$\,. Therefore, if \eqref{falsa} holds true, then $N':=9-\sharp\partial\X_9\ge 2$ and $\tilde N:=9-\sharp\partial\overline\X_9=1$ so that $\sharp\partial\overline\X_9=\sharp\partial\overline\X_{\tilde N}+8\le \sharp\partial\overline\X_{N'}+6$\,, where in the last inequality we have used that $N'\ge 2$ so that, by \eqref{percan}, $\sharp\partial\overline\X_{N'}\ge 2\ge 2+\sharp\partial\overline\X_{\tilde N}$\,.
By \eqref{falsa}, we thus get that $\sharp\partial\X_9\le \sharp\partial\overline\X_{N'}+5$\,.
Therefore, by arguing as in \eqref{incrocio}, we get again a contradiction.
\vskip5pt
If follows that in any case $\sharp\partial\X_N\ge\sharp\partial\overline{\X}_N$\,.
Then,
$$
(\Per_{\gr}+\mu+\E_{\mathrm{el}})(\X_N)\ge\sharp\partial\X_N\ge\sharp\partial\overline{\X}_N=(\Per_{\gr}+\mu+\E_{\mathrm{el}})(\overline\X_N)\,,
$$
which implies that the inequalities above are actually equalities and hence that $\mu(\X_N)=\E_{\mathrm{el}}(\X_N)=0$ and that $\sharp\partial{\X}_N=\sharp\partial\overline{\X}_N$\,.
Therefore, $\overline\X_N$ is a minimizer of $\E_{\VV^\delta}$ in $\A_N$\,, thus concluding the proof of the theorem.
\end{proof}
%%%%%%%%%%%%%%%%%%%%%%%%%%%%%%%%%%%%%%
%%%%%%%%%%%%%%%%%%%%%%%%%%%%%%%%%%%%%%
%%%%%%%%%%%%%%%%%%%%%%%%%%%%%%%%%%%%%%
%%%%%%%%%%%%%%%%%%%%%%%%%%%%%%%%%%%%%%
%%%%%%%%%%%%%%%%%%%%%%%%%%%%%%%%%%%%%%
%%%%%%%%%%%%%%%%%%%%%%%%%%%%%%%%%%%%%%
%%%%%%%%%%%%%%%%%%%%%%%%%%%%%%%%%%%%%%
%%%%%%%%%%%%%%%%%%%%%%%%%%%%%%%%%%%%%%
%%%%%%%%%%%%%%%%%%%%%%%%%%%%%%%%%%%%%%
%%%%%%%%%%%%%%%%%%%%%%%%%%%%%%%%%%%%%%
%%%%%%%%%%%%%%%%%%%%%%%%%%%%%%%%%%%%%%
%%%%%%%%%%%%%%%%%%%%%%%%%%%%%%%%%%%%%%
%%%%%%%%%%%%%%%%%%%%%%%%%%%%%%%%%%%%%%
%%%%%%%%%%%%%%%%%%%%%%%%%%%%%%%%%%%%%%
%%%%%%%%%%%%%%%%%%%%%%%%%%%%%%%%%%%%%%
%%%%%%%%%%%%%%%%%%%%%%%%%%%%%%%%%%%%%%
%%%%%%%%%%%%%%%%%%%%%%%%%%%%%%%%%%%%%%
%%%%%%%%%%%%%%%%%%%%%%%%%%%%%%%%%%%%%%
%%%%%%%%%%%%%%%%%%%%%%%%%%%%%%%%%%%%%%
%%%%%%%%%%%%%%%%%%%%%%%%%%%%%%%%%%%%%%
%%%%%%%%%%%%%%%%%%%%%%%%%%%%%%%%%%%%%%
%%%%%%%%%%%%%%%%%%%%%%%%%%%%%%%%%%%%%%
%%%%%%%%%%%%%%%%%%%%%%%%%%%%%%%%%%%%%%
%%%%%%%%%%%%%%%%%%%%%%%%%%%%%%%%%%%%%%
%%%%%%%%%%%%%%%%%%%%%%%%%%%%%%%%%%%%%%
%%%%%%%%%%%%%%%%%%%%%%%%%%%%%%%%%
%%%%%%%%%%%%%%%%%%%%%%%%%%%%%%%%%
%%%%%%%%%%%%%%%%%%%%%%%%%%%%%%%%%
%%%%%%%%%%%%%%%%%%%%%%%%%%%%%%%%%
%%%%%%%%%%%%%%%%%%%%%%%%%%%%%%%%%
%%%%%%%%%%%%%%%%%%%%%%%%%%%%%%%%%
%%%%%%%%%%%%%%%%%%%%%%%
%%%%%%%%%%%%%%%%%%%%%%%
%%%%%%%%%%%%%%%%%%%%%%%
%%%%%%%%%%%%%%%%%%%%%%%
%%%%%%%%%%%%%%%%%%%%%%%
%%%%%%%%%%%%%%%%%%%%%%%

\end{document}